\newcommand{\nequation}{\setcounter{equation}{0}}
\newcommand{\R}{{\Bbb R}}
\DeclareMathOperator{\Int}{int}
\def\XXint#1#2#3{{\setbox0=\hbox{$#1{#2#3}{\int}$}
\vcenter{\hbox{$#2#3$}}\kern-.5\wd0}}
\newtheorem{theorem}{Theorem}
\newtheorem{proposition}{Proposition}[section]
\newtheorem{corollary}[proposition]{Corollary}
\newtheorem{lemma}[proposition]{Lemma}
\newtheorem{definition}[proposition]{Definition}
\newtheorem{remark}[proposition]{Remark}
\newtheorem{example}[proposition]{Example}
\newtheorem{figuretext}{Figure}
\numberwithin{equation}{section}
\date{\today}
\title[Asymptotics to all orders of the Euler--Darboux equation in a triangle]
{Asymptotics to all orders of the Euler--Darboux equation in a triangle}
\author{Julian Mauersberger}
\address{Department of Mathematics, KTH Royal Institute of Technology, \\ 100 44 Stockholm, Sweden.}
\email{julianma@kth.se}
\begin{document}
\begin{abstract} 
\noindent
In Einstein's theory of relativity, the interaction of two collinearly polarized plane gravitational waves can be described by a Goursat problem for the Euler--Darboux equation in a triangular domain. In this paper, using a representation of the solution in terms of Abel integrals, we give a full asymptotic expansion of the solution near the diagonal of the triangle. The expansion is related to the formation of a curvature singularity of the spacetime. In particular, our framework allows for boundary data with derivatives which are singular at the corners. This level of generality is crucial for the application to gravitational waves.
\end{abstract}

\maketitle

\noindent
{\small{\sc AMS Subject Classification (2010)}: 35Q75, 41A60, 83C35.}

\noindent
{\small{\sc Keywords}: Euler--Darboux equation, gravitational waves, Einstein's theory of relativity, collinear polarization.}

\setcounter{tocdepth}{1}

\section{Introduction}\nequation
The collision of two collinearly polarized plane gravitational waves in Einstein's theory of relativity can be described mathematically \cite{G1991} by a Goursat problem for the Euler--Darboux equation \cite{M1973}
\begin{align}\label{linearernst}  
V_{xy} - \frac{V_x + V_y}{2(1-x-y)} = 0, \qquad (x,y) \in D,
\end{align}  
in the triangular region $D$ defined by (see Figure \ref{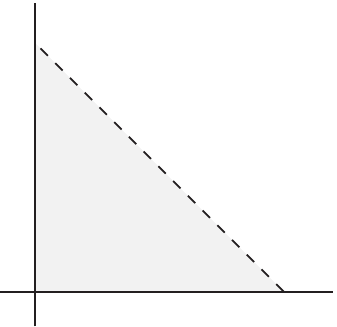})
\begin{align}\label{Ddef}
D = \{(x,y) \in \R^2\, | \, x \geq 0, \; y \geq 0, \; x+y < 1\}.
\end{align}

Besides its importance for collinearly polarized plane waves, equation \eqref{linearernst} can also be viewed as the linear limit of the nonlinear hyperbolic Ernst equation \cite{CF1984}. The hyperbolic Ernst equation is a reduction of the vacuum Einstein field equations which is similar to the elliptic version of the same equation \cite{E1968}. It describes the collision of two (not necessarily collinearly polarized) plane gravitational waves.  While many exact solutions of the hyperbolic Ernst equation have been found (see \cite{ET1989,EGA1988,FI1987b,NH1977}), it seems that the problem of determining the solution from given data has only been treated in the series of papers \cite{HE1989a, HE1989b, HE1990, HE1991} and in \cite{FST1999, AG2004,PM2017}. In \cite{HE1989a, HE1989b, HE1990, HE1991}, the authors relate the Goursat problem to the solution of a homogeneous Hilbert problem and in \cite{FST1999, AG2004, PM2017} the problem is studied by means of inverse scattering. In the case of collinear polarization, both \cite{FST1999} and \cite{HE1989a} provide representations for the solution in terms of Abel integrals. 

In the same way that the vacuum Einstein equations in the case of colliding plane gravitational waves can be reduced to the hyperbolic Ernst equation, the Einstein--Maxwell equations in vacuum can be reduced to an integrable system of two coupled nonlinear PDEs in the case of colliding electromagnetic plane waves \cite{E1968b, CX1985}. In the linear limit, both of these coupled equations reduce to the Euler--Darboux equation \eqref{linearernst}.

Of particular interest is the behavior of the solution of \eqref{linearernst} near the triangle's diagonal edge $x + y = 1$. Indeed, this behavior is related to the formation of a curvature singularity of the spacetime by the mutual focusing of the colliding waves. In \cite{S1972}, it was observed that the solution must behave like $\ln(1-x-y)$ as $x+y\to 1$. In this paper, using an Abel integral representation of the solution, we derive an  asymptotic expansion to all orders as $x+y\to 1$.

\begin{figure}
	\bigskip
	\begin{center}
		\begin{overpic}[width=.4\textwidth]{D.pdf}
			\put(29,33){ $D$}
			\put(103,9.7){\small $x$}
			\put(9.5,101){\small $y$}
			\put(82.5,3.5){\small $1$}
			\put(5,82){\small $1$}
			\put(27,3.5){\small $V(x,0) = V_0(x)$}
			\put(2,26){\small \rotatebox{90}{$V(0,y) = V_1(y)$}}
		\end{overpic}
		\begin{figuretext}\label{D.pdf}
			The triangular region $D$ defined in \eqref{Ddef} and the boundary conditions relevant for the Goursat problem. 
		\end{figuretext}
	\end{center}
\end{figure}
Our first result (Theorem \ref{linearmainth}) establishes a mathematically precise version of the classical Abel integral representation of the solution of the Goursat problem for \eqref{linearernst} with given boundary data. This type of representation is well known (cf. e.g. \cite{HE1989a, GS2002, FST1999}) and the main purpose of Theorem~\ref{linearmainth} is to provide a formulation that is suitable for our needs. We discuss regularity, the behavior at the boundary of $D$, and uniqueness of the solution under reasonable assumptions on the boundary data. In this context, reasonable means that our results can be applied to the common examples of collinear solutions (cf. e.g. \cite{FI1987,KP1971,S1972}). In particular, we allow for boundary data with singular derivatives at the corners of $D$. As a consequence, our representation of the solution of \eqref{linearernst} contains singular integrands, which is the main challenge in the analysis of the Goursat problem.

Our second result (Theorem \ref{GeneralThmAsymptotics}) describes the asymptotic behavior of the solution near the diagonal of $D$. We show that $V(x,1-x-\epsilon)$ admits an asymptotic expansion to all orders of the form
\begin{align*}
V(x,1-x-\epsilon) =\sum_{j=0}^{J}  f_j(x) \epsilon^j \ln(\epsilon) 
\; +\sum_{j=0}^{J} g_j(x)\epsilon^j   
+O(\epsilon^{J+1}\ln(\epsilon)), \qquad \epsilon \downarrow 0,
\end{align*}
where the coefficients $f_j,g_j$ are given explicitly in terms of the boundary data $V_0(x)=V(x,0)$ and $V_1(y)=V(0,y)$, and where the error term is uniform on compact subsets of~$(0,1)$. The first few terms of the asymptotic formula are given by
\begin{align}\label{formulanaiveasymptotics}\begin{split}
V(x,1-x-\epsilon) =& \; -\frac{1}{\pi} \left( h_0(x)+ h_1(x) \right) \ln(\epsilon)\\
&+\frac{1}{\pi} \frac{d}{dx} \left(\int_{0}^{x} h_0(k) \ln(4(x-k)) dk - \int_{x}^{1} \ln(4(k-x)) h_1(k)dk\right) \\
& -\frac{1}{2\pi} \left( h_0'(x)+ h_1'(x) \right) \epsilon \ln(\epsilon)\\ 
&+ \frac{1}{2\pi} \frac{d^2}{dx^2} \left(\int_{0}^{x} h_0(k) \ln(4(x-k)) dk - \int_{x}^{1} \ln(4(k-x)) h_1(k)dk\right) \epsilon \\
&+ O(\epsilon^2 \ln(\epsilon)), \qquad \epsilon \downarrow 0,
\end{split}
\end{align}
where
\begin{align}\label{Defh0h1}
h_0(k) = \sqrt{1-k} \int_0^k \frac{V_{0x}(x')}{\sqrt{k-x'}} dx', \hspace{0.5cm} 	h_1(k) = \sqrt{k} \int_0^{1-k} \frac{V_{1y}(y')}{\sqrt{1-k-y'}} dy'.
\end{align}

We finally point out that representations for solutions of boundary value problems similar to the Goursat problem for \eqref{linearernst} have been subject of recent mathematical research. For instance, such representations have been found for solutions of Protter problems for equations of Keldysh type in $(3+1)$ dimensions \cite{PHNS2017a,PHNS2017b}. In~\cite{PHNS2017a}, the representations have also been applied to derive asymptotic formulas for the solution. The Euler--Darboux equation and integral representations of its solution have also arisen in the context of hierarchies of integrable systems such as the KdV hierarchy~\cite{KMM2010}.

The two main results of the paper are presented in Section \ref{SectionMainresults}, and their proofs are given in Section \ref{SectionProof1} and Section \ref{SectionAsymptotics}, respectively.
In Section \ref{SectionApplication} we apply our results to collinearly polarized colliding gravitational waves and give full asymptotic expansions for the components of the Weyl tensor.

\section{Main results} \label{SectionMainresults}
Let $D$ be defined by \eqref{Ddef}. Since \eqref{linearernst} is a linear equation, we can assume that $V$ is real-valued and that $V(0,0)=0$. We introduce a notion of $C^n$-solution of the Goursat problem for \eqref{linearernst} in $D$ as follows. 

\begin{definition}\label{linearsolutiondef}\upshape
	Let $V_0(x)$, $x \in [0, 1)$, and $V_1(y)$, $y \in [0,1)$, be real-valued functions.
	A function $V:D \to \R$ is called a {\it $C^n$-solution of the Goursat problem for \eqref{linearernst} in $D$ with data $\{V_0, V_1\}$} if there exists an $\alpha \in [0,1)$ such that
	\begin{align*}
	\begin{cases}
	V \in C(D) \cap C^n(\Int(D)), 
	\\
	\text{$V(x,y)$ satisfies the Euler--Darboux equation \eqref{linearernst} in $\Int(D)$,}
	\\
	\text{$x^\alpha V_x, y^\alpha V_y, x^\alpha y^\alpha V_{xy} \in C(D)$},
	\\
	\text{$V(x,0) = V_0(x)$ for $x \in [0,1)$,}
	\\
	\text{$V(0,y) = V_1(y)$ for $y \in [0,1)$.}
	\end{cases}
	\end{align*}
\end{definition}

The following theorem solves the Goursat problem for \eqref{linearernst} in $D$ by providing a representation for the solution in terms of the boundary data. 

\begin{theorem}[Solution of the Euler--Darboux equation in a triangle]\label{linearmainth}
	Let $n \geq 2$ be an integer and suppose $\alpha \in [0,1)$. Let $V_0(x)$, $x \in [0, 1)$, and $V_1(y)$, $y \in [0,1)$, be two real-valued functions such that 
	\begin{align}\label{V0V1assumptions}
	\begin{cases}
	V_0, V_1 \in C([0,1)) \cap C^n((0,1)), 
	\\
	\text{$x^\alpha V_{0x}, y^\alpha V_{1y}, x^{\alpha+1}V_{0xx}, y^{\alpha+1} V_{1yy} \in C([0,1))$,} 
	\\
	V_0(0) = V_1(0) = 0.
	\end{cases}
	\end{align}
	Then
	\begin{align}\begin{split}\label{linearVrepresentation}
	V(x,y) = &\; \frac{1}{\pi} \int_0^x \frac{\sqrt{1-k}}{\sqrt{(1-y-k)(x-k)}} \bigg(\int_0^k \frac{V_{0x}(x')}{\sqrt{k - x'}}dx'\bigg) dk
	\\
	&  + \frac{1}{\pi} \int_{1-y}^1 \frac{\sqrt{k}}{\sqrt{(k - (1-y))(k-x)}} \bigg(\int_0^{1-k} \frac{V_{1y}(y')}{\sqrt{1 - y' - k}} dy'\bigg) dk, \qquad (x,y) \in D,
		\end{split}
	\end{align}  
	is a $C^{n-1}$-solution of the Goursat problem for \eqref{linearernst} in $D$ with data $\{V_0, V_1\}$ (and with the same $\alpha$). Furthermore, $V$ is the only solution in the sense of Definition \ref{linearsolutiondef}.
\end{theorem}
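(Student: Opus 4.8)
The plan is to exploit the symmetry of the problem, reduce the verification of the partial differential equation to a one-line computation on the integral kernel, read off the boundary values from the semigroup property of the Abel operator, and tame the corner singularities by a single regularizing substitution. Write $V = V^{(0)} + V^{(1)}$, where $V^{(0)}$ is the first integral in \eqref{linearVrepresentation} and $V^{(1)}$ the second, and note that in terms of the functions in \eqref{Defh0h1} one has $V^{(0)}(x,y) = \frac1\pi\int_0^x \frac{h_0(k)}{\sqrt{(1-y-k)(x-k)}}\,dk$. The substitution $k \mapsto 1-k$ shows that $V^{(1)}(x,y)$ equals the first-type integral evaluated at $(y,x)$ with $V_0$ replaced by $V_1$; since \eqref{linearernst} and $D$ are invariant under $x \leftrightarrow y$, it suffices to analyze $V^{(0)}$ and to transfer every conclusion to $V^{(1)}$ by symmetry and to $V$ by linearity. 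A preliminary step records, from the hypotheses \eqref{V0V1assumptions}, that $h_0 \in C^{n-1}((0,1))$ with $h_0(0)=0$ and the corner estimates $h_0(k) = O(k^{1/2-\alpha})$ and $h_0'(k) = O(k^{-1/2-\alpha})$ as $k \downarrow 0$; these two rates are exactly what the two weighted assumptions on $V_{0x}$ and $V_{0xx}$ yield after writing the inner Abel integral in the self-similar variable $x' = ku$.

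For the equation itself, a direct differentiation shows that for each fixed $k$ the kernel $\phi_k(x,y) = \big[(1-y-k)(x-k)\big]^{-1/2}$ factors as a function of $x$ times a function of $y$ and satisfies $\phi_{k,xy} = \frac{\phi_{k,x}+\phi_{k,y}}{2(1-x-y)}$, i.e. it solves \eqref{linearernst}. To move the derivatives inside I substitute $k = xt$, rewriting $V^{(0)}(x,y) = \frac1\pi\int_0^1 \frac{\sqrt{x}\,h_0(xt)}{\sqrt{(1-y-xt)(1-t)}}\,dt$. This removes the moving endpoint and fixes the only kernel singularity at $t=1$, while near $t=0$ the integrand and its formal $x,y$-derivatives are integrable by the corner estimates on $h_0$ (and $h_0(0)=0$ kills any contribution from the lower endpoint in the equivalent integration by parts). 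On compact subsets of $\Int(D)$ this legitimizes differentiation under the integral to order $n-1$, giving $V^{(0)}\in C^{n-1}(\Int(D))$; recombining the differentiated integrand by means of the kernel identity then yields the Euler--Darboux equation for $V^{(0)}$, hence for $V$.

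The boundary data are recovered from the Abel semigroup. On $y=0$ the weight $\sqrt{1-k}$ cancels and $V^{(0)}(x,0) = \frac1\pi\int_0^x \frac{1}{\sqrt{x-k}}\Big(\int_0^k \frac{V_{0x}(x')}{\sqrt{k-x'}}\,dx'\Big)dk$ is the composition of two half-order Abel integrals; by Fubini and the identity $\int_{x'}^x \frac{dk}{\sqrt{(x-k)(k-x')}} = \pi$ this collapses to $\int_0^x V_{0x}(x')\,dx' = V_0(x)$, using $V_0(0)=0$, while $V^{(1)}(x,0)=0$ because its range of integration degenerates; the symmetric computation gives $V(0,y)=V_1(y)$. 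Continuity up to the open legs follows by dominated convergence in the substituted integral, and $V(x,y) \to 0 = V(0,0)$ at the corner because the domain of integration shrinks to a point. The weighted conditions $x^\alpha V_x,\, y^\alpha V_y,\, x^\alpha y^\alpha V_{xy} \in C(D)$ come out of the same differentiated integrals: each derivative in $x$ produces at worst a factor $x^{-\alpha}$ near the corner, as read off from $\partial_x[\sqrt{x}\,h_0(xt)] = \tfrac{1}{2\sqrt x}h_0(xt) + \sqrt{x}\,t\,h_0'(xt)$ together with the two corner rates, and this is precisely absorbed by the stated weight.

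Uniqueness is a Goursat argument adapted to the corner singularity. If $V,\tilde V$ are two solutions with the same data then $W = V - \tilde V$ solves \eqref{linearernst} with vanishing data on both legs and obeys $|W_x| \lesssim x^{-\alpha}$, $|W_y| \lesssim y^{-\alpha}$, $|W_{xy}| \lesssim x^{-\alpha}y^{-\alpha}$. Since the tangential derivatives $W_x(x,0)$ and $W_y(0,y)$ vanish, one has $W_x(x,y)=\int_0^y W_{xy}\,dy'$ and $W_y(x,y)=\int_0^x W_{xy}\,dx'$; feeding these into \eqref{linearernst} gives a closed linear Volterra equation $G(x,y) = c(x,y)\big(\int_0^y G(x,y')\,dy' + \int_0^x G(x',y)\,dx'\big)$ for $G=W_{xy}$, with $c = \frac{1}{2(1-x-y)}$ bounded on each $D_\delta = D\cap\{x+y \le 1-\delta\}$. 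The weight $x^\alpha y^\alpha$ renders all integrals finite, and Picard iteration of this operator gains a factorial from the iterated integrations, so the Neumann series forces $G \equiv 0$ on every $D_\delta$, hence $W_x=W_y=0$ and $W\equiv0$. I expect the main obstacle throughout to be exactly the feature flagged in the introduction, the singular derivatives of the data at the corners: the crux is the uniform control, via the $\alpha$-weighted estimates on $h_0,h_0'$ and on $W$, of the singular integrals both in the differentiation-under-the-integral step and in the Volterra iteration.
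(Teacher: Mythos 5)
Your architecture tracks the paper's own proof closely: the reduction of the second integral to the first via $k \mapsto 1-k$ and symmetry, the corner estimates on $h_0$ (the content of Lemma \ref{PropertiesAbel}), the substitution $k = xt$ to freeze the endpoint before differentiating (exactly \eqref{Substitution of Abel integral} and \eqref{Formula of derivative of Abel integral}), the Fubini evaluation $\int_{x'}^{x}\frac{dk}{\sqrt{(x-k)(k-x')}}=\pi$ of the boundary values (which is the computation inside Lemma \ref{InversionAbel}), and your weighted Picard iteration is a correct explicit version of the successive-approximation argument the paper cites from Goursat. One slip in the preliminaries: $h_0(0)=0$ is false when $\alpha \ge 1/2$, since $h_0(k)$ may blow up like $k^{1/2-\alpha}$; fortunately your arguments only ever need the correct weaker fact $k\,h_0(k) \to 0$.

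The genuine gap is at the crux of the theorem: the verification that $V^{(0)}$ satisfies \eqref{linearernst}. Your plan is ``kernel identity plus recombination'', but the identity $\partial_x\partial_y\phi = \frac{\partial_x\phi+\partial_y\phi}{2(1-x-y)}$ for $\phi(k;x,y) = [(1-y-k)(x-k)]^{-1/2}$ is a statement at \emph{fixed} $k$, and it cannot be applied directly under the original integral: $\partial_x\phi \sim -\frac12(x-k)^{-3/2}$ is not integrable up to $k=x$ and the endpoint value $\phi(k;x,y)|_{k=x}$ is infinite, which is precisely why the substitution $k=xt$ is forced. But after that substitution every $x$-derivative is a \emph{total} derivative that also passes through the moving argument $xt$, producing terms in $h_0'$ and $\partial_k\phi$ that the kernel identity does not see. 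Concretely, for fixed $(x,y)\in \Int(D)$ write $F(x,y,t) = x\,h_0(xt)\,\phi(xt;x,y)$, so that $\pi V^{(0)} = \int_0^1 F\,dt$; applying the kernel identity to cancel the terms where both derivatives hit the $(x,y)$-slots of $\phi$ leaves
\begin{equation*}
\partial_x\partial_y F - \frac{\partial_x F + \partial_y F}{2(1-x-y)}
= \bigl(h_0(xt)+x t\,h_0'(xt)\bigr)\,Q(xt) + x t\,h_0(xt)\,\partial_k Q(xt),
\qquad
Q(k) := \partial_y\phi(k;x,y) - \frac{\phi(k;x,y)}{2(1-x-y)},
\end{equation*}
which is \emph{not} pointwise zero. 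The equation holds only because this expression is the exact derivative $\partial_t\bigl[t\,h_0(xt)\,Q(xt)\bigr]$ and because both endpoint contributions vanish: at $t=0$ since $k\,h_0(k)\to0$, and at $t=1$ since $Q(k) = \frac{(k-x)\,\phi(k;x,y)}{2(1-y-k)(1-x-y)} = O(\sqrt{x-k})$, a cancellation resting on the algebra $(1-y-k)-(x-k) = 1-x-y$. This integration by parts, with its two endpoint cancellations, is exactly what the paper's finite-part computation supplies (formula \eqref{FormulaDerivative} combined with the identity $G_y = G/(2(1-y-k))$), and it is the heart of the proof of Theorem \ref{linearmainth}; your proposal compresses it into the phrase ``recombining the differentiated integrand by means of the kernel identity'', so as written the PDE step is not established.
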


\begin{remark} \upshape
	In the case of $n=2$, Theorem \ref{linearmainth} states that $V$ is at least in $C^1$. However, the derivative $V_{xy}$ always exists and is continuous in $\Int D$. Furthermore, $x^\alpha y^\alpha V_{xy}$ is still in $C(D)$ in this case as it is required in Definition \ref{linearsolutiondef}. 
\end{remark}

\begin{remark}\upshape
	An alternative integral representation for the solution of the Goursat problem for the Euler--Darboux equation was already derived in \cite{S1972} using Riemann's classical method \cite{CH1962, G1964}. The representation \eqref{linearVrepresentation} relies on Abel integrals and is analogous to formulas derived in \cite{HE1989a}, whereas the formula of \cite{S1972} involves the Legendre function $P_{-1/2}$ of order $-1/2$. 
\end{remark}

The representation \eqref{linearVrepresentation} can be used to study the behavior of the solution $V(x,y)$ near the diagonal $x+y=1$, i.e. the behavior of $V(x,1-x-\epsilon)$ as $\epsilon \downarrow 0$. Letting $y=1-x-\epsilon$ in our representation formula \eqref{linearVrepresentation}, we find
\begin{align}\begin{split} \label{splitVasympt}
V(x,1-x-\epsilon) = &\; \frac{1}{\pi} \int_0^x \frac{\sqrt{1-k}}{\sqrt{(x-k+\epsilon)(x-k)}} \bigg(\int_0^k \frac{V_{0x}(x')}{\sqrt{k - x'}}dx'\bigg) dk
\\ 
&  + \frac{1}{\pi} \int_{x+\epsilon}^1 \frac{\sqrt{k}}{\sqrt{(k - x-\epsilon)(k-x)}} \bigg(\int_0^{1-k} \frac{V_{1y}(y')}{\sqrt{1 - y' - k}} dy'\bigg) dk \\ 
=:&\; X_1 + X_2.
\end{split}
\end{align} 

We define $h_0, h_1$ by \eqref{Defh0h1} and the kernels $K_0$ and $K_1$ by
\begin{align} \label{DefKernels}
K_0(u,t) = \frac{1}{\sqrt{u}\sqrt{u+t}}, \quad K_1(u,t)= \frac{1}{\sqrt{u}\sqrt{u-t}},
\end{align}
for $u>0$ and $-u<t<u$. Furthermore, we define for $j\ge 0$ the constants
\begin{align} \label{Defcj}
	c_j =  \frac{1}{2^j} \left( \prod_{l=0}^{j-1} (2l+1) \right) = (-1)^j \frac{\partial^j}{\partial t^j}K_0(1,0) =\frac{\partial^j}{\partial t^j}K_1(1,0) 
\end{align}
and
\begin{align} \label{DefH}
	H_0^j(x,k) = h_0(k) - \sum_{l=0}^{j} \frac{h_0^{(l)}(x) (-1)^l}{l!} (x-k)^l, \quad H_1^j(x,k) = h_1(k) - \sum_{l=0}^{j} \frac{h_1^{(l)}(x)}{l!} (k-x)^l.
\end{align}

The following theorem provides the asymptotics of $V(x,1-x-\epsilon)$ as $\epsilon \downarrow 0$ up to all orders.

\begin{theorem}[Asymptotic expansion to all orders] \label{GeneralThmAsymptotics}
	Let $J\ge 0$ be an integer. Suppose that $V_0,V_1\in C^{J+2}((0,1))$ satisfy the conditions \eqref{V0V1assumptions}. Then the unique solution $V(x,y)$ of the Goursat problem for \eqref{linearernst} with data $\{ V_0,V_1 \}$ enjoys the asymptotic expansion
	\begin{align*}
	V(x,1-x-\epsilon) =&-\frac{1}{\pi}\sum_{j=0}^{J}  \frac{c_j}{(j!)^2}(h_0^{(j)}(x)+h_1^{(j)}(x)) \epsilon^j \ln(\epsilon) \\
	&  +\frac{1}{\pi}\sum_{j=0}^{J} ((-1)^jA_j(x)+B_j(x)) \epsilon^j   \\
	&+O(\epsilon^{J+1}\ln(\epsilon)), \qquad \epsilon \downarrow 0, \quad 0<x<1,
	\end{align*}
	where the error term is uniform with respect to $x$ in compact subsets of $(0,1)$,
	\begin{align*}
	A_j(x) =& \; \frac{c_j}{j!}\left( \int_{0}^{x} \frac{H_0^j(x,k)}{(x-k)^{j+1}}dk +\sum_{l =0}^{j-1} \frac{(-1)^l}{l!(l-j)}h_0^{(l)}(x)x^{l-j} +\frac{(-1)^jh_0^{(j)}(x)}{j!} \ln(x) \right) \\
	&+ \frac{h_0^{(j)}(x)}{j!}\left( \int_0^1\frac{v^j}{\sqrt{v+1}\sqrt{v}}dv+\int_{1}^{\infty} v^j \left( \frac{1}{\sqrt{v+1}\sqrt{v}}-\sum_{l=0}^{j} \frac{1}{l!} (-1)^l c_l  v^{-l-1}\right)dv  \right)\\
	&+ \frac{h_0^{(j)}(x)}{j!}\sum_{l=0}^{j-1} (-1)^l \frac{c_l}{l! (l-j)}, 
\\
	B_j(x) =&\; \frac{c_j}{j!} \left( \int_{x}^{1} \frac{H_1^j(x,k)}{(k-x)^{j+1}}dk + \sum_{l=0}^{j-1}  \frac{1}{l!(l-j)} h_1^{(l)}(x) (1-x)^{l-j} + \frac{h_1^{(j)}(x)}{j!} \ln(1-x) \right)\\
	&+ \frac{h_1^{(j)}(x)}{j!}\left( \int_{1}^{\infty} v^j \left( \frac{1}{\sqrt{v-1}\sqrt{v}}-\sum_{l=0}^{j} \frac{c_l}{l!} v^{-l-1}\right)dv +\sum_{l=0}^{j-1} \frac{c_l}{l!(l-j)} \right),
	\end{align*}
	and $h_0$, $h_1$, $H_0^j$, $H_1^j$ are defined by \eqref{Defh0h1} and \eqref{DefH}. The first few terms can also be represented as in \eqref{formulanaiveasymptotics} (see Remark \ref{TheoremNaiveAsymptotics}).
\end{theorem}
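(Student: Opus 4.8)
The plan is to begin from the decomposition $V(x,1-x-\epsilon)=X_1+X_2$ of \eqref{splitVasympt} and to analyse $X_1$ and $X_2$ by one and the same method, exploiting that the reflection $k\mapsto 1-k$ interchanges the two terms together with the kernels $K_0,K_1$ of \eqref{DefKernels} and the densities $h_0,h_1$ of \eqref{Defh0h1}. First I would identify the inner Abel integrals in \eqref{splitVasympt} with $h_0,h_1$ and substitute $u=x-k$ in $X_1$ and $u=k-x$ in $X_2$, obtaining
\[
X_1=\frac1\pi\int_0^{x}h_0(x-u)K_0(u,\epsilon)\,du,\qquad X_2=\frac1\pi\int_{\epsilon}^{1-x}h_1(x+u)K_1(u,\epsilon)\,du .
\]
The hypothesis $V_0,V_1\in C^{J+2}((0,1))$ together with \eqref{V0V1assumptions} guarantees that $h_0$ and $h_1$ are of class $C^{J+1}$ on compact subsets of $(0,1)$ with derivatives bounded there, which is exactly what is needed to Taylor expand the densities and to obtain error bounds uniform in $x$.

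For $X_1$ I would Taylor expand $h_0(x-u)=\sum_{l=0}^{J}\frac{(-1)^l h_0^{(l)}(x)}{l!}u^l+H_0^J(x,x-u)$ with $H_0^J$ as in \eqref{DefH}, insert this into the integral, and rescale $u=\epsilon v$ in each monomial contribution, which reduces the $l$-th term to $\frac{(-1)^l h_0^{(l)}(x)}{l!\,\pi}\,\epsilon^{l}\int_0^{x/\epsilon}\frac{v^{l}}{\sqrt v\,\sqrt{v+1}}\,dv$. The asymptotics of these integrals as $\epsilon\downarrow0$ are governed by the large-$v$ expansion $\frac{v^{l}}{\sqrt v\sqrt{v+1}}=\sum_{m\ge0}\frac{(-1)^m c_m}{m!}v^{l-1-m}$, whose coefficients are precisely the constants $c_m$ of \eqref{Defcj}; termwise integration produces a $\frac{(-1)^l c_l}{l!}\ln(x/\epsilon)$ term, the growing and decaying powers $\frac{(x/\epsilon)^{l-m}}{l-m}$ for $m\neq l$, and the convergent renormalized tail $\int_1^{\infty}v^{l}\big(\frac1{\sqrt v\sqrt{v+1}}-\sum_{m=0}^{l}\frac{(-1)^m c_m}{m!}v^{-m-1}\big)dv$ together with $\int_0^1\frac{v^l}{\sqrt v\sqrt{v+1}}\,dv$. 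Collecting, for each fixed output order $\epsilon^j$, the contributions of all monomials $l$ (and using $\ln(x/\epsilon)=\ln x-\ln\epsilon$, so that the two factors $(-1)^j$ cancel) yields the $\epsilon^j\ln\epsilon$ coefficient $-\frac1\pi\frac{c_j}{(j!)^2}h_0^{(j)}(x)$ and the $\epsilon^j$ coefficient $\frac{(-1)^j}{\pi}A_j(x)$, where the $\ln x$, the boundary sums $\sum_{l<j}$, the renormalized $v$-integrals and the finite part discussed below are exactly the terms in $A_j(x)$. The same computation applied to $X_2$, using $h_1(x+u)=\sum_l\frac{h_1^{(l)}(x)}{l!}u^l+H_1^J$ and $\frac{v^l}{\sqrt v\sqrt{v-1}}=\sum_m\frac{c_m}{m!}v^{l-1-m}$ (no alternating sign, since the argument of $h_1$ carries no minus), produces the coefficient $-\frac1\pi\frac{c_j}{(j!)^2}h_1^{(j)}(x)$ of $\epsilon^j\ln\epsilon$ and $\frac1\pi B_j(x)$ of $\epsilon^j$; adding $X_1$ and $X_2$ gives the stated expansion.

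Two points require care in order to match the precise form of the theorem. The coefficient of $\epsilon^j$ arises most naturally as the Hadamard finite part of the divergent integral $\int_0^{x}h_0(x-u)u^{-j-1}du$, and I would rewrite this finite part as $\int_0^{x}\frac{H_0^j(x,k)}{(x-k)^{j+1}}dk$ plus the boundary terms $\sum_{l=0}^{j-1}$ and the $\ln x$ term appearing in the first line of $A_j$; the integral converges precisely because $H_0^j(x,k)=O((x-k)^{j+1})$ as $k\to x$ by \eqref{DefH}, and the analogous rewriting holds for $B_j$. The main obstacle, and the place where uniformity is won or lost, is the remainder estimate: one must show that $\frac1\pi\int_0^{x}H_0^J(x,x-u)K_0(u,\epsilon)\,du=O(\epsilon^{J+1}\ln\epsilon)$ uniformly for $x$ in compact subsets of $(0,1)$. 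I would do this by a further rescaling of the remainder integral, using the bound $|H_0^J(x,x-u)|\le C u^{J+1}$ with $C$ controlled by $\sup|h_0^{(J+1)}|$ on the compact set, and tracking the single logarithmic factor that appears exactly at the critical order $\epsilon^{J+1}$; the endpoint singularity at $v=1$ produced by $K_1$ in $X_2$ (whose lower limit is $\epsilon$ rather than $0$) is integrable and handled by the same scheme with only notational changes.
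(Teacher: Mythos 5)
Your coefficient bookkeeping follows essentially the paper's own route (Taylor-expand $h_0,h_1$ about $x$ as in \eqref{DefH}, exploit the homogeneity of the kernels \eqref{DefKernels}, rescale, and identify finite parts), but the error estimate --- which you yourself single out as the crux --- breaks down at two points. First, the quantity you propose to bound, $\int_0^x H_0^J(x,x-u)K_0(u,\epsilon)\,du$, is \emph{not} $O(\epsilon^{J+1}\ln\epsilon)$: since $|H_0^J(x,x-u)|\lesssim u^{J+1}$ near $u=0$ and $K_0(u,\epsilon)\le u^{-1}$, this integral is merely $O(1)$, and in fact it contains exactly the finite-part main terms $\int_0^x H_0^j(x,k)(x-k)^{-j-1}dk$ at every order $\epsilon^j$, $j\le J$ --- the very terms you place inside $A_j(x)$ in your preceding paragraph, so as written your scheme double-counts them. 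The genuine remainder appears only after expanding the kernel inside this integral as well: setting
\begin{align*}
R_{K_0,J}(u,\epsilon) \;=\; K_0(u,\epsilon)-\sum_{l=0}^{J}\frac{(-1)^l c_l}{l!}\,\epsilon^l u^{-l-1},
\end{align*}
the error term to estimate is $\int_0^x H_0^J(x,x-u)\,R_{K_0,J}(u,\epsilon)\,du$ (this is what the paper bounds), and extracting the main terms from the $H_0^J$-integral also forces the cancellation between $\int_0^x (H_0^J-H_0^j)(x,k)(x-k)^{-j-1}dk$ and the growing powers $x^{l-j}\epsilon^j$ that the monomials $l>j$ inject at order $\epsilon^j$; without tracking this cancellation your collection of orders cannot actually land on $(-1)^jA_j+B_j$.

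Second, the bound $|H_0^J(x,x-u)|\le C u^{J+1}$ for all $u\in[0,x]$ with $C$ controlled by $\sup|h_0^{(J+1)}|$ on a compact set is not available under the standing hypotheses, and this is precisely the difficulty the paper is built to handle. Taylor's theorem requires $h_0^{(J+1)}$ to be bounded on the whole interval $[x-u,x]$, which for $u$ near $x$ reaches down to $k=0$, where the derivatives of $h_0$ blow up: by Lemma \ref{PropertiesAbel} only $k^{\alpha-1/2+j}h_0^{(j)}$ extends continuously to $0$, and \eqref{V0V1assumptions} permits singular data at the corner. For instance $V_0(x)=x$ gives $h_0(k)=2\sqrt{k}\sqrt{1-k}$, whose derivatives are unbounded at $k=0$; and when $\alpha>1/2$ even $H_0^J(x,k)$ itself is unbounded as $k\to0$, behaving like $k^{1/2-\alpha}$, so your inequality is simply false there. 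The paper resolves this by splitting the remainder integral at $k=x/2$: on $[x/2,x]$ it uses the genuine Taylor bound with $h_0^{(J+1)}$ evaluated only on $[m/2,M]\subset(0,1)$, while on $[0,x/2]$ it uses the weighted bound on $k^{\alpha-1/2}H_0^J(x,k)$ together with $|R_{K_0,J}(x-k,\epsilon)|\lesssim\epsilon^{J+1}(x-k)^{-J-2}$ and $x-k\ge x/2$. With that two-region argument in place, your rescaling idea does finish the job (the single $\ln\epsilon$ at the critical order comes from the $O(v^{-1})$ tail of $v^{J+1}R_{K_0,J}(v,1)$); without it, the estimate fails exactly for the singular boundary data the theorem is designed to cover.
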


\section{Proof of Theorem \ref{linearmainth}} \label{SectionProof1}
In order to prove Theorem \ref{linearmainth} we study some properties of Abel integrals.

\begin{definition} \upshape
	Let $\alpha \in [0,1)$ and suppose $k^{\alpha} h \in C([0,1))$. Then the \emph{Abel transform $\mathcal{A} h$ of the function $h$} is defined by
	\[
	\mathcal{A} h(x) = \int_{0}^{x} \frac{h(k)}{\sqrt{x-k}} dk, \qquad x \in [0,1).
	\]
\end{definition}

The following lemma shows that the Abel transform is well defined and gives some basic properties.

\begin{lemma} \label{PropertiesAbel}
	Let $n\ge 0$ and $\alpha \in [0,1)$. Suppose $h \in C^n((0,1))$ and $k^{\alpha+j} h^{(j)} \in C([0,1))$ for $0\le j \le n$. Then the Abel transform $f=\mathcal{A}h$ of $h$, defined by
	\[
	f(x)= \mathcal{A}h(x) = \int_{0}^{x} \frac{h(k)}{\sqrt{x-k}}dk,
	\]
	is in $C^n((0,1))$. Furthermore, it holds that $x^{\alpha-1/2+j} f^{(j)} \in C([0,1))$ for all $0\le j \le n$.
\end{lemma}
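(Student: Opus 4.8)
The plan is to remove the singularity of the kernel by a scaling substitution and then differentiate under the integral sign. Substituting $k = xs$ gives
\[
f(x) = \int_0^x \frac{h(k)}{\sqrt{x-k}}\,dk = \sqrt{x}\int_0^1 \frac{h(xs)}{\sqrt{1-s}}\,ds =: \sqrt{x}\,g(x),
\]
which is convenient for two reasons: the factor $(1-s)^{-1/2}$ is integrable at $s=1$, and the $x$-dependence of the integrand now sits only in the argument of $h$. Near $s=0$ the hypothesis $k^\alpha h \in C([0,1))$ forces $h(xs) = O((xs)^{-\alpha})$, so the integrand behaves like $s^{-\alpha}$, which is integrable because $\alpha<1$; thus $g(x)$ is well defined for every $x \in (0,1)$.

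First I would establish that $g \in C^n((0,1))$, with
\[
g^{(j)}(x) = \int_0^1 \frac{s^j h^{(j)}(xs)}{\sqrt{1-s}}\,ds, \qquad 0 \le j \le n,
\]
proved by induction on $j$, differentiating under the integral sign. For each $j$ the integral converges on $(0,1)$: reading the hypothesis $k^{\alpha+j}h^{(j)} \in C([0,1))$ as $h^{(j)}(xs) = O((xs)^{-\alpha-j})$ near $s=0$, the integrand is $O(x^{-\alpha-j}s^{-\alpha})$, again integrable. Differentiation under the integral is legitimate because, on any compact subinterval $[a,b]\subset(0,1)$, the $x$-derivative of the integrand is dominated, uniformly in $x\in[a,b]$, by an $x$-independent integrable function of $s$; the two singularities at $s=0$ and $s=1$ are handled separately by the bound just mentioned and by the integrability of $(1-s)^{-1/2}$. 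Since $\sqrt{x}$ is smooth on $(0,1)$, it follows that $f = \sqrt{x}\,g \in C^n((0,1))$, which is the first assertion.

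It then remains to prove the weighted boundary regularity $x^{\alpha - 1/2 + j} f^{(j)} \in C([0,1))$. By the Leibniz rule applied to $f = \sqrt{x}\,g$ and the fact that $\tfrac{d^i}{dx^i}\sqrt{x}$ is a constant multiple of $x^{1/2-i}$,
\[
x^{\alpha-1/2+j} f^{(j)}(x) = \sum_{i=0}^{j} \binom{j}{i}\, c_i'\, x^{\alpha + (j-i)}\, g^{(j-i)}(x),
\]
for suitable constants $c_i'$, so the claim reduces to showing $x^{\alpha+m}g^{(m)} \in C([0,1))$ for $0 \le m \le n$. Factoring $h^{(m)}(k) = k^{-\alpha-m}\phi_m(k)$ with $\phi_m := k^{\alpha+m}h^{(m)} \in C([0,1))$ and substituting into the formula for $g^{(m)}$ yields the crucial identity
\[
x^{\alpha+m} g^{(m)}(x) = \int_0^1 \frac{s^{-\alpha}\,\phi_m(xs)}{\sqrt{1-s}}\,ds,
\]
in which the $x$-singular prefactor has been completely absorbed. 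Since $s^{-\alpha}(1-s)^{-1/2}$ is integrable on $(0,1)$ (its integral is the finite Beta value $B(1-\alpha,\tfrac12)$) and $\phi_m$ is continuous and locally bounded, dominated convergence shows that the right-hand side is continuous in $x$ on $[0,1)$, with value $\phi_m(0)\,B(1-\alpha,\tfrac12)$ at $x=0$. This completes the reduction.

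I expect the main obstacle to be the rigorous justification of differentiating under the integral sign in the presence of the two competing singularities at $s=0$ and $s=1$, together with making the domination uniform on compact subsets of $(0,1)$ so that continuity of $g^{(j)}$ (and hence the stated $C^n$ regularity) is genuinely obtained, rather than merely the pointwise existence of the derivatives. Once the scaling substitution is in place, the boundary statement at $x=0$ is comparatively painless, since the factorization through $\phi_m$ turns it into a single application of dominated convergence.
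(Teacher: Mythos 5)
Your proposal is correct, and its core is the same as the paper's: the scaling substitution $k=xs$ that turns $\mathcal{A}h$ into $\sqrt{x}\int_0^1 h(xs)(1-s)^{-1/2}\,ds$, dominated convergence for continuity and for differentiating the difference quotient, and the Beta-function integral $B(1-\alpha,\tfrac12)$ controlling the singularity at $s=0$. Where you diverge is in how the induction over derivatives is organized. The paper differentiates once, rewrites the result as the operator identity $(\mathcal{A}h)'=\tfrac{1}{2x}\mathcal{A}h+\tfrac{1}{x}\mathcal{A}(kh')$, observes that $kh'$ satisfies the same hypotheses with $n$ replaced by $n-1$, and closes everything (including the weighted boundary statements, which it proves only for $j=0$ via an inf/sup squeeze) by induction on the lemma itself. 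You instead keep the closed-form expression $g^{(j)}(x)=\int_0^1 s^{j}h^{(j)}(xs)(1-s)^{-1/2}\,ds$ for every derivative, reduce the weighted claim by Leibniz to $x^{\alpha+m}g^{(m)}\in C([0,1))$, and dispose of that in one stroke through the factorization $h^{(m)}(k)=k^{-\alpha-m}\phi_m(k)$, which absorbs the weight exactly and even exhibits the boundary value $\phi_m(0)B(1-\alpha,\tfrac12)$. The paper's recursion is more compact and concentrates all boundary analysis in the base case, but it silently relies on the (easy, Leibniz-rule) verification that $kh'$ inherits the hypotheses; your version costs a little more bookkeeping up front but makes every derivative and every weighted limit explicit, with no auxiliary data to re-certify. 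Both are complete proofs; the choice is one of bookkeeping, not substance.
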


\begin{proof}
	First, we show that the function $f$ is well defined. Fix some $x\in (0,1)$. Then we can define the constant $M =\sup_{k\in [0,x]} |h(k)k^\alpha|$ and get
	\[
	|f(x)| \le \int_{0}^{x} \left| \frac{h(k)}{\sqrt{x-k}} \right| dk \le M \int_{0}^{x} \frac{1}{k^\alpha \sqrt{x-k}} dk.
	\]
	The transformation $u=k/x$ gives
	\[
	\int_{0}^{x} \frac{1}{k^\alpha \sqrt{x-k}} dk = \int_{0}^{1} \frac{x}{(ux)^\alpha \sqrt{x-ux}} du = x^{1/2-\alpha} \int_{0}^{1} \frac{1}{u^\alpha \sqrt{1-u}} du = x^{1/2-\alpha} B\left(1-\alpha,\frac{1}{2}\right),
	\]
	where $B$ denotes the Beta function. By the identity $B(x,y)=\Gamma(x)\Gamma(y)/\Gamma(x+y)$, $\Gamma$ denoting the Gamma function, this yields
	\begin{align}\label{KeyEstimate}
	\int_{0}^{x} \left| \frac{h(k)}{\sqrt{x-k}} \right| dk \le M \, x^{1/2-\alpha} \sqrt{\pi} \frac{\Gamma(1-\alpha)}{\Gamma(3/2-\alpha)}.
	\end{align}
	Hence $f$ is well defined for all $x\in(0,1)$.
	
	Next we show that $f$ is continuous on $(0,1)$. We apply the transformation $u=k/x$ and obtain
	\begin{align} \label{Substitution of Abel integral}
	f(x) = \int_{0}^{x} \frac{h(k)}{\sqrt{x-k}}dk = \sqrt{x}\int_{0}^{1} \frac{h(ux)}{\sqrt{1-u}} du.
	\end{align}
	Then it follows easily by Lebegue's dominated convergence theorem that $f$ is in $C((0,1))$.
	
	In order to show that $x^\gamma f \in C([0,1))$ for $\gamma = \alpha-1/2$ we show that the limit $\lim_{x\to0} x^\gamma f(x)$ exists. In a similar way as \eqref{KeyEstimate}, we obtain the estimate
	\[
	\inf_{k\in [0,x]}(k^\alpha h(k)) x^{1/2-\alpha} \sqrt{\pi} \frac{\Gamma(1-\alpha)}{\Gamma(3/2-\alpha)}
	\le f(x) \le \sup_{k\in [0,x]}(k^\alpha h(k)) x^{1/2-\alpha} \sqrt{\pi} \frac{\Gamma(1-\alpha)}{\Gamma(3/2-\alpha)}
	\]
	for all $x \in (0,1)$. Hence, by multiplying the inequality above with $x^\gamma$, we find
	\[
	\lim_{x\to0} x^{\alpha-1/2} f(x)= \lim_{k\to 0} (k^\alpha h(k))  \sqrt{\pi} \frac{\Gamma(1-\alpha)}{\Gamma(3/2-\alpha)}.
	\]
	In particular, the function $x^\gamma f$ can be continuously extended onto $[0,1)$.
	
	In order to show differentiability of $f$, we consider formula \eqref{Substitution of Abel integral} again. By applying Lebeugue's dominated convergence theorem to the difference quotient,  and, by using that $k^{1+\alpha}h' \in C([0,1))$, we obtain
	\begin{align} \begin{split}\label{Formula of derivative of Abel integral}
	\frac{d}{dx} \mathcal{A}h(x) &= \frac{1}{2\sqrt{x}} \int_{0}^{1} \frac{h(ux)}{\sqrt{1-u}} du + \sqrt{x} \int_{0}^{1} \frac{h'(ux) u}{\sqrt{1-u}} du \\
	&= \frac{1}{2x} \mathcal{A}h(x)+ \frac{1}{x} \mathcal{A}(kh')(x).
	\end{split}
	\end{align} 
	Since $kh' $ also fulfills the assumptions of the Lemma (but with reduced $n$), all remaining statements follow.
\end{proof}

\begin{remark} \label{RemarkHadamardFP}\upshape
	We will also use the following formula for the derivative of the Abel transform, given by
	\begin{align} \label{FormulaDerivative}
	\frac{d}{dx} \left( \int_{0}^{x} \frac{h(k)}{\sqrt{x-k}}dk\right) =\frac{h(\delta)}{\sqrt{x-\delta}}- \frac{1}{2} \int_{0}^{\delta} \frac{h(k)}{(x-k)^{3/2}}dk + \int_{\delta}^{x} \frac{h'(k)}{\sqrt{x-k}}dk
	\end{align}
	for an arbitrary $0<\delta<x$. This is obtained by splitting the integral, integrating by parts, and differentiating.
\end{remark}

\begin{remark} \label{RemarkUpperBoundirr}\upshape
	Note that the proof of Lemma \ref{PropertiesAbel} does not change if we replace the segment $[0,1)$ in the lemma by a segment $[0,b)$, where $0<b<1$.
\end{remark}

The following lemma will be useful in the proof of Theorem \ref{linearmainth}, since the Abel integrals in the integral representation \eqref{linearVrepresentation} depend on two variables.

\begin{lemma} \label{AbelWithParameter}
	Let $h \in C((0,1))$ and $x^{\alpha} h\in C([0,1))$ for some $\alpha<1$. Suppose that $g \in C(D)$ and that there exists a number $\beta \ge 0$ and a constant $C >0$ such that
	\[
	\left| g(y,k) \right| \le \frac{C}{(1-x-y)^{\beta}}, \hspace{0.5cm} \text{for all } (x,y) \in D \text{ and } k \in [0,x].
	\]  
	Then the function $f$ defined by
	\[
	f(x,y)=  \int_{0}^{x} g(y,k) \frac{h(k)}{\sqrt{x-k}}dk
	\]
	is in $C(\Int D)$. Furthermore, we have $x^{\alpha-1/2} f \in C(D)$.
\end{lemma}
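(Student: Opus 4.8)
The plan is to reduce the two‑variable statement to the one‑variable Abel estimates already obtained in the proof of Lemma \ref{PropertiesAbel}, the key device being the substitution $u=k/x$. Writing $\tilde h(k)=k^\alpha h(k)$, which belongs to $C([0,1))$ by hypothesis, this substitution turns the defining integral into
\[
f(x,y) = x^{1/2-\alpha}\int_0^1 \frac{g(y,ux)\,\tilde h(ux)}{u^\alpha\sqrt{1-u}}\,du,
\qquad\text{so that}\qquad
x^{\alpha-1/2}f(x,y)=\int_0^1 \frac{g(y,ux)\,\tilde h(ux)}{u^\alpha\sqrt{1-u}}\,du .
\]
All of the work will then be to justify passing limits inside these integrals by dominated convergence. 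The crucial structural observation is that the only place the hypothesis on $g$ could fail to furnish a bound is the diagonal $x+y=1$, which is excluded from $D$; for any target point in $D$ the quantity $1-x-y$ stays bounded below on a neighborhood, so $g$ contributes only a harmless constant. Well‑definedness and finiteness of $f$ then follow exactly as in \eqref{KeyEstimate}: for fixed $x>0$ the residual integrand $u^{-\alpha}(1-u)^{-1/2}$ is integrable on $(0,1)$ precisely because $\alpha<1$, and yields the Beta‑function bound $B(1-\alpha,\tfrac12)$.

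Next I would prove continuity at an arbitrary point $(x_0,y_0)\in D$ with $x_0>0$ (this covers $\Int D$ as well as the open segment of the $x$‑axis). Take a sequence $(x_n,y_n)\to(x_0,y_0)$ in $D$. Since $1-x_0-y_0>0$, eventually $1-x_n-y_n\ge\delta>0$, whence $|g(y_n,ux_n)|\le C\delta^{-\beta}$ uniformly in $u\in[0,1]$; and since the $x_n$ lie in a fixed compact subinterval of $[0,1)$, the values $\tilde h(ux_n)$ are uniformly bounded. Hence the integrand in the display for $x^{\alpha-1/2}f$ is dominated by a constant multiple of $u^{-\alpha}(1-u)^{-1/2}\in L^1(0,1)$, while the pointwise limit is supplied by continuity of $g$ and $\tilde h$. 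Dominated convergence gives continuity of $x^{\alpha-1/2}f$, and multiplying by the continuous positive factor $x^{1/2-\alpha}$ shows $f$ is continuous on $\{(x,y)\in D:\,x>0\}$, in particular $f\in C(\Int D)$.

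Finally I would treat the boundary $x=0$, which is exactly where the weight $x^{\alpha-1/2}$ is needed. For $(x_n,y_n)\to(0,y_0)$ with $0\le y_0<1$ one has $1-x_n-y_n\to 1-y_0>0$, so the same uniform bound on $g$ and the same dominating function apply. As $x_n\to 0$ the integrand converges pointwise to $g(y_0,0)\,\tilde h(0)\,u^{-\alpha}(1-u)^{-1/2}$ — here $(y_0,0)\in D$ because $y_0<1$, and $\tilde h(0)$ exists since $\tilde h\in C([0,1))$ — so dominated convergence yields
\[
\lim_{(x,y)\to(0,y_0)} x^{\alpha-1/2}f(x,y) = g(y_0,0)\,\tilde h(0)\,B\!\left(1-\alpha,\tfrac12\right).
\]
This limit is continuous in $y_0$, so it provides the continuous extension of $x^{\alpha-1/2}f$ to the segment $\{x=0\}$ and establishes $x^{\alpha-1/2}f\in C(D)$.

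I expect the main (and essentially only) obstacle to be pinning down a single dominating function valid uniformly along the approaching sequence. The hypothesis bounds $g$ by $C(1-x-y)^{-\beta}$, a bound that degenerates precisely on the excluded diagonal; the whole argument hinges on exploiting that $1-x-y$ is bounded below near any point of $D$, so that the potentially singular factor $g$ reduces to a constant and the remaining $u^{-\alpha}(1-u)^{-1/2}$ singularity is tamed by $\alpha<1$ together with the estimate \eqref{KeyEstimate}.
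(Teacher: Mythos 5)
Your proposal is correct and follows essentially the same route as the paper: the substitution $u=k/x$, domination of the integrand by a constant multiple of $u^{-\alpha}(1-u)^{-1/2}\in L^1(0,1)$ (using that $1-x-y$ is bounded below on a neighborhood of any point of $D$), and Lebesgue dominated convergence. If anything, your treatment of the boundary $\{x=0\}$ is more explicit than the paper's, which only records the uniform bound before concluding $x^{\alpha-1/2}f\in C(D)$, whereas you identify the limiting value $g(y_0,0)\,\tilde h(0)\,B\left(1-\alpha,\tfrac12\right)$ and its continuity in $y_0$.
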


\begin{proof}
	Let $U$ be an open set in $D\backslash \{ x=0 \}$ such that $\overline{U} \subset D\backslash \{ x=0 \}$, where $\overline{U}$ denotes the closure of $U$. Substituting $u=k/x$ and using the assumptions of the lemma yields
	\[
	\int_{0}^{x} \left| g(y,k) \frac{h(k)}{\sqrt{x-k}} \right| dk=  \sqrt{x}\int_{0}^{1} \left| g(y,ux) \frac{h(ux)}{\sqrt{1-u}} \right| du \le  \frac{C\sqrt{x}}{(1-x-y)^{\beta}} \int_{0}^{1} \frac{|h(ux)|}{\sqrt{1-u}} du.
	\]
	The  integrand $|h(ux)|/\sqrt{1-u}$ is bounded by an $L^1((0,1))$-function independent of $x$ and the term $(1-x-y)^{-\beta}$ is bounded by a constant on $U$. This shows that $f$ is in $C(D \backslash \{ x=0 \})$ by applying Lebegue's dominate convergence theorem. 
	
	In order to prove the last statement of the lemma, we observe
	\[
	\int_{0}^{1}\frac{x^\alpha|h(ux)|}{\sqrt{1-u}} du \le \sup_{k\in [0,x)} (k^\alpha |h(k)|) \int_0^1 \frac{1}{u^\alpha \sqrt{1-u}} du.
	\]
	Hence, if we fix a neighborhood $W$ around a point $(0,y) \in D$ such that $\overline{W}\subset D$, we can bound the expression
	\[
	x^{\alpha-1/2} \sqrt{x}\int_{0}^{1} \left| g(y,ux) \frac{h(ux)}{\sqrt{1-u}} \right| du
	\] 
	independently of $x$ and $y$. This shows that $x^{\alpha-1/2} f \in C(D)$.
\end{proof}

The next lemma gives a well-known inversion formula for Abel integrals. This will be needed in order to check the boundary conditions $V(x,0) = V_0(x)$ and $V(0,y)=V_1(y)$.

\begin{lemma} \label{InversionAbel}
	Let $h,f$ be as in Lemma \ref{PropertiesAbel}. Then we have
	\[
	h(k)= \frac{1}{\pi} \frac{d}{dk} \left( \int_{0}^{k}\frac{f(x)}{\sqrt{k-x}} dx \right), \qquad k\in (0,1).
	\]
\end{lemma}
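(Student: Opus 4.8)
The plan is to prove the inversion formula by the classical route: substitute the definition of $f$ into the right-hand side, interchange the order of integration so that the two square-root kernels combine into a Beta integral, and then differentiate. First I would write
\[
\int_0^k \frac{f(x)}{\sqrt{k-x}}\, dx = \int_0^k \frac{1}{\sqrt{k-x}} \left( \int_0^x \frac{h(s)}{\sqrt{x-s}}\, ds \right) dx,
\]
and swap the order of integration over the triangle $0 \le s \le x \le k$ to obtain $\int_0^k h(s) \big( \int_s^k (k-x)^{-1/2}(x-s)^{-1/2}\, dx \big)\, ds$. The substitution $x = s + (k-s)u$ turns the inner integral into $\int_0^1 [u(1-u)]^{-1/2}\, du = B(\tfrac12,\tfrac12) = \pi$, independent of $s$ and $k$, so that
\[
\int_0^k \frac{f(x)}{\sqrt{k-x}}\, dx = \pi \int_0^k h(s)\, ds.
\]
Differentiating both sides in $k$ and using that $h$ is continuous on $(0,1)$, the fundamental theorem of calculus yields $\frac{d}{dk} \int_0^k f(x)/\sqrt{k-x}\, dx = \pi h(k)$, which is the claim.

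The one step requiring care is the interchange of integration, since $h$ need only satisfy $k^\alpha h \in C([0,1))$ and may blow up at the origin. To justify this I would bound the integrand in absolute value and compute the resulting positive double integral in the already-swapped order: with $M = \sup_{s \in [0,k]} |s^\alpha h(s)|$, finite because $s^\alpha h \in C([0,1))$, the inner Beta integral still equals $\pi$, so
\[
\int_0^k \int_s^k \frac{|h(s)|}{\sqrt{k-x}\,\sqrt{x-s}}\, dx\, ds = \pi \int_0^k |h(s)|\, ds \le \pi M \int_0^k s^{-\alpha}\, ds < \infty,
\]
because $\alpha < 1$. This is exactly the type of estimate already used in \eqref{KeyEstimate}, and it guarantees absolute convergence of the double integral, so Tonelli's theorem legitimizes the swap.

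I expect the main (and essentially only) obstacle to be this integrability check at $x = 0$; once the kernel singularities and the possible singularity of $h$ are controlled by the condition $\alpha < 1$, everything else is a direct evaluation of an elementary Beta integral followed by an application of the fundamental theorem of calculus. I would also note that $f = \mathcal{A}h$ is continuous on $(0,1)$ with $x^{\alpha - 1/2} f \in C([0,1))$ by Lemma \ref{PropertiesAbel}, which ensures that the outer integral $\int_0^k f(x)/\sqrt{k-x}\, dx$ is itself a well-defined Abel integral of the same type, so no additional convergence issues arise on that side.
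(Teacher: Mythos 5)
Your proof is correct and follows essentially the same route as the paper's: substitute the definition of $f$, interchange the order of integration so the two kernels combine into the Beta integral $B(\tfrac12,\tfrac12)=\pi$, and differentiate to recover $h$. The only difference is that you explicitly justify the interchange via Tonelli's theorem and the bound $|h(s)|\le M s^{-\alpha}$, a step the paper performs without comment, so your write-up is if anything slightly more complete.
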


\begin{proof}
	Due to Lemma \ref{PropertiesAbel} the integral 
	\[
	\int_{0}^{k}\frac{f(x)}{\sqrt{k-x}} dx
	\]
	exists and is differentiable on $(0,1)$. By the definition of $f$, we have
	\[
	\int_0^k \frac{f(x)}{\sqrt{k-x}} dx = \int_{0}^{k} \int_{0}^{x} \frac{h(\kappa)}{\sqrt{x-\kappa}\sqrt{k-x}} d\kappa dx = \int_{0}^{k} h(\kappa) \int_{\kappa}^{k} \frac{1}{\sqrt{x-\kappa}\sqrt{k-x}} dx d\kappa.
	\]
	We apply the transformation $y=(k-x)/(k-\kappa)$ and get 
	\begin{align*}
	\int_0^k \frac{f(x)}{\sqrt{k-x}} dx = \int_0^k h(\kappa) \int_{0}^{1} \frac{dy}{\sqrt{y}\sqrt{1-y}}d\kappa=\pi \int_{0}^{k}h(\kappa)d\kappa.
	\end{align*}
	This implies the desired statement.
\end{proof}

\begin{proof}[Proof of Theorem \ref{linearmainth}]
	It follows immediately by Lemma \ref{AbelWithParameter} with
	 \[
	 g(y,k) = \sqrt{1-k} /\sqrt{1-y-k}
	 \]
	 that $V$ is in $C(D)$ with $V(0,0) = 0$. Furthermore, letting
	\begin{align} \label{DefG}
	G(y,k) =  \frac{\sqrt{1-k}}{\sqrt{1-y-k}} \mathcal{A}V_{0x}(k),
	\end{align}
	we have
	\begin{align}\begin{split}\label{FirstStepDerivativeV}
	V_x(x,y) = &\; \frac{1}{\pi} \frac{\partial}{\partial x}\int_0^x \frac{G(y,k)}{\sqrt{x-k}} dk 
	\\ 
	&  + \frac{1}{\pi} \frac{\partial}{\partial x} \left( \int_{1-y}^1 \frac{\sqrt{k}}{\sqrt{(k - (1-y))(k-x)}} \bigg(\int_0^{1-k} \frac{V_{1y}(y')}{\sqrt{1 - y' - k}} dy'\bigg) dk \right).  
	\end{split}
	\end{align}
	Note that $x^{\alpha-1/2}\mathcal{A}V_{0x}(x)$ is in $C([0,1))$ by Lemma \ref{PropertiesAbel} and hence $G$ satisfies all assumptions in Lemma \ref{PropertiesAbel} in the sense of Remark \ref{RemarkUpperBoundirr}. Thus the first integral in \eqref{FirstStepDerivativeV} can be differentiated. For the second integral in \eqref{FirstStepDerivativeV}, it is clear that we can interchange the derivative and the integral sign. Under consideration of \eqref{Formula of derivative of Abel integral} we find
	\begin{align*}
	V_x(x,y) =& \frac{1}{2x\pi}\mathcal{A}\left( G(y,\cdot) \right)(x) + \frac{1}{x\pi} \mathcal{A}\left(  kG_k(y,\cdot) \right)(x)\\ &+\frac{1}{2\pi}\mathcal{A}\left(  \frac{\sqrt{1-(\cdot)}}{(1-x-(\cdot))^{3/2}} \mathcal{A}V_{1y}(\cdot) \right)(y).
	\end{align*}
	
	Since Lemma \ref{AbelWithParameter} can be applied to all three terms (with different choices of $g$, $h$ and in the third case with $x$ and $y$ interchanged), we obtain that $V_x \in C(\Int D)$ and $x^\alpha V_x \in C(D)$. The cases $V_y$ and $V_{xy}$ are similar and we find $V \in C^{n-1}(\Int D)$ by induction.
	
	In order to prove that $V$ satisfies equation \eqref{linearernst}, we use the alternative formula \eqref{FormulaDerivative} to represent the derivative.  We only prove that the first term
	\[
	X_1(x,y) =  \frac{1}{\pi} \int_0^x \frac{\sqrt{1-k}}{\sqrt{(1-y-k)(x-k)}} \bigg(\int_0^k \frac{V_{0x}(x')}{\sqrt{k - x'}}dx'\bigg) dk
	\]of \eqref{linearVrepresentation} satisfies the equation, since the other term is similar. We fix some $(x,y) \in \Int D$ and define $G(y,k)$ by \eqref{DefG}.
	By \eqref{FormulaDerivative} it holds for any fixed $\delta \in (0,x)$ that
	\begin{align*}
	\pi X_{1x}(x,y)=&\frac{G(y,\delta)}{\sqrt{x-\delta}}- \frac{1}{2} \int_{0}^{\delta} \frac{G(y,k)}{(x-k)^{3/2}}dk + \int_{\delta}^{x} \frac{G_k(y,k)}{\sqrt{x-k}}dk, \\
	\pi X_{1y}(x,y) =&\int_{0}^{x} \frac{G_y(y,k)}{\sqrt{x-k}}dk,  \\
	\pi X_{1xy}(x,y) =& \,\frac{G_y(y,\delta)}{\sqrt{x-\delta}}- \frac{1}{2} \int_{0}^{\delta} \frac{G_y(y,k)}{(x-k)^{3/2}}dk + \int_{\delta}^{x} \frac{G_{yk}(y,k)}{\sqrt{x-k}}dk.
	\end{align*}
	In particular,
	\begin{align*}
	\pi X_{1x}(x,y) &= \lim_{\epsilon\rightarrow 0} \left( \frac{G(y,x)}{\sqrt{\epsilon}} - \frac{1}{2} \int_{0}^{x-\epsilon} \frac{G(y,k)}{(x-k)^{3/2}}dk\right), \\
	\pi X_{1xy}(x,y) &=\lim_{\epsilon\rightarrow 0} \left( \frac{G_y(y,x)}{\sqrt{\epsilon}} - \frac{1}{2} \int_{0}^{x-\epsilon} \frac{G_y(y,k)}{(x-k)^{3/2}}dk \right).
	\end{align*}
	The identity $G_y(y,k) = (2(1-y-k))^{-1}G(y,k)$ yields
	\begin{align*}
	&\pi X_{1x}(x,y) +\pi X_{1y}(x,y)\\
	=&\lim_{\epsilon\rightarrow 0} \left( \frac{G(y,x)}{\sqrt{\epsilon}} - \frac{1}{2} \int_{0}^{x-\epsilon} \frac{G(y,k)(1-y-k)-G(y,k)(x-k)}{(1-y-k)(x-k)^{3/2}}dk\right)\\
	=&\; 2\pi (1-x-y) X_{1xy}(x,y).
	\end{align*}
	The other term of \eqref{linearVrepresentation} is similar and hence $V$ satisfies \eqref{linearernst}.
	
	Next we prove that $V$ satisfies the boundary conditions. We have
	\[
	V(x,0) = \frac{1}{\pi} \mathcal{A} (\mathcal{A}V_{0x})(x).
	\]
	Furthermore, Lemma \ref{InversionAbel} (with $h=V_{0x}$ and $f=\mathcal{A}V_{0x}$) gives
	\[
	V_{0x}(x)=\frac{1}{\pi}\frac{d}{dx} \mathcal{A} (\mathcal{A}V_{0x})(x)=V_x(x,0) 
	\]
	for all $x \in (0,1)$ and hence $V(x,0) = V_{0}(x)+c$ for a constant $c \in \R$. But $0 = V(0,0) = V_{0}(0)$. Thus we have $c=0$ and $V(x,0) = V_{0}(x)$. The case $V(0,y)$ is completely analogous.
	
	It remains to show that $V$ is the only solution. For that, it suffices to show that a function $V$ satisfying \eqref{linearernst} with $V(x,0) = 0 = V(0,y)$ must vanish in $D$. Let $D_T = \{ (x,y) \in D: x+y < T \}$ for $0<T<1$. Then the coefficients
	\[
	\frac{1}{2(1-x-y)}
	\]
	of \eqref{linearernst} are bounded on $D_T$. It follows by the method of successive approximations \cite[pp. 136--137]{G1964} that $V$ is unique on $D_T$ and hence vanishes on $D_T$. Since $T$ was arbitrary in $(0,1)$, the solution $V$ is zero on $D$. This completes the proof.
\end{proof}

\section{Proof of Theorem \ref{GeneralThmAsymptotics}} \label{SectionAsymptotics}

We define $h_0$, $h_1$, $K_0$, $K_1$, $H_0^j$, $H_1^j$, $X_1,X_2$, and $c_j$ by \eqref{Defh0h1}, \eqref{splitVasympt}, \eqref{DefKernels}, \eqref{Defcj}, and \eqref{DefH}. Note that, by Lemma \ref{PropertiesAbel}, $h_0$ and $h_1$ have the same order of regularity as $V_{0x}$ and $V_{0y}$, respectively, and that $k^{\alpha-1/2} h_0(k)$ and $k^{\alpha-1/2} h_1(k)$ are in $C([0,1))$. Furthermore, the kernels satisfy $K_i(au,at)=a^{-1} K_i(u,t)$ for $i=0,1$ and $a>0$.

\begin{proof}[Proof of Theorem \ref{GeneralThmAsymptotics}]
	Fix some compact subset $I \subset (0,1)$. Consider $X_1$ first. Then, for fixed $x \in I$, we have
	\[
	X_1 = \frac{1}{\pi}\int_{0}^{x} \frac{H_0^J(x,k) }{\sqrt{x-k}\sqrt{x-k+\epsilon}} dk  + \frac{1}{\pi} \sum_{j=0}^{J}\frac{h_0^{(j)}(x)(-1)^j}{j!}\int_{0}^{x}K_0(x-k,\epsilon)(x-k)^jdk.
	\]
	After the substitution $u=x-k$, the rightmost integral in the expression above can be written as
	\[
	\int_{0}^{x} u^jK_0(u,\epsilon) du = \int_{0}^{\epsilon} u^jK_0(u,\epsilon) du+\int_{\epsilon}^{x} u^jK_0(u,\epsilon) du = \epsilon^j \int_0^1v^jK_0(v,1)dv + \int_{\epsilon}^{x} u^jK_0(u,\epsilon) du,
	\]
	where we applied the transformation $u=\epsilon v$. Denoting
	\[
	R_{K_0,J}(u,\epsilon) = K_0(u,\epsilon)-\sum_{l=0}^{J} \frac{\epsilon^l}{l!} (-1)^l c_l u^{-l-1},
	\] 
	we find
	\begin{align} \begin{split} \label{EquationFirstPartAsym}
	\int_{\epsilon}^{x} u^jK_0(u,\epsilon) du =& \int_{\epsilon}^{x} u^j \left( K_0(u,\epsilon)-\sum_{l=0}^{J} \frac{\epsilon^l}{l!} (-1)^l c_l u^{-l-1}\right)du +\sum_{l=0}^{J} \frac{\epsilon^l}{l!} (-1)^l c_l\int_{\epsilon}^{x}u^{j-l-1}du \\
	=& \;  \epsilon^j \int_{1}^{\infty} v^j \left( K_0(v,1)-\sum_{l=0}^{J} \frac{1}{l!} (-1)^l c_l  v^{-l-1}\right)dv - \int_{x}^{\infty} u^j R_{K_0,J}(u,\epsilon)du\\  &+ \sum_{\substack{l=0\\ l \neq j}}^J \frac{\epsilon^l}{l!} (-1)^l c_l\frac{1}{j-l} (x^{j-l}-\epsilon^{j-l}) + \frac{\epsilon^j}{j!} (-1)^j c_j (\ln(x) - \ln(\epsilon)), \end{split}
	\end{align}
	where we again substituted $u=\epsilon v$. We can split the sum $\sum_{l=0}^{J} \frac{1}{l!} (-1)^l c_l  v^{-l-1}$ at $l=j$ and obtain
	\begin{align*}
	\int_{\epsilon}^{x} u^jK_0(u,\epsilon) du =& \; \epsilon^j \int_{1}^{\infty} v^j \left( K_0(v,1)-\sum_{l=0}^{j} \frac{1}{l!} (-1)^l c_l  v^{-l-1}\right)dv - \int_{x}^{\infty} u^j R_{K_0,J}(u,\epsilon)du\\  &+ \mathcal{C}_j(x)+\sum_{l=j+1}^{J} \frac{\epsilon^l}{l!} (-1)^l c_l\frac{1}{j-l} x^{j-l} - \sum_{l=0}^{j-1} \frac{\epsilon^l}{l!} (-1)^l c_l\frac{1}{j-l} \epsilon^{j-l}\\
	&+ \frac{\epsilon^j}{j!} (-1)^j c_j (\ln(x) - \ln(\epsilon)),
	\end{align*}
	where
	\[
	\mathcal{C}_j(x)= \sum_{l=0}^{j-1} \frac{\epsilon^l}{l!} (-1)^l c_l\frac{1}{j-l} x^{j-l}.
	\]
	In the first term of $X_1$ we can expand $K_0$ (since $H_0^J(x,k) \sim (x-k)^{J+1}$) and find
	\begin{align*}
	\int_{0}^{x} \frac{H_0^J(x,k) }{\sqrt{x-k}\sqrt{x-k+\epsilon}} dk &= \sum_{j=0}^{J} \frac{\epsilon^j}{j!} (-1)^j c_j  \int_{0}^{x} H_0^J(x,k) (x-k)^{-j-1} dk \\
	&+ \int_0^x H_0^J(x,k) R_{K_0,J}(x-k,\epsilon) dk.
	\end{align*}
	In the integrals $\int_0^x H_0^J(x,k)(x-k)^{-j-1}dk$, we write 
	\[
	H_0^J(x,k) = h_0(k) - \sum_{l=0}^{j}\frac{h_0^{(l)}(x)(-1)^l}{l!}(x-k)^l - \sum_{l=j+1}^{J}\frac{h_0^{(l)}(x)(-1)^l}{l!}(x-k)^l.
	\]
	Then the sums $\sum_{l = j+1}^J (\cdots)$ vanish together with the terms
	\[
	\sum_{j=0}^J \frac{h_0^{(j)}(x)(-1)^j}{j!}\mathcal{C}_j(x).
	\]
	It remains to show that the remainders are in $O(\epsilon^{J+1}\ln(\epsilon))$. We write
	\[
	R_{K_0,J}(u,\epsilon) = \frac{\partial^{J+1}K_0(u,\xi_\epsilon)}{\partial\epsilon^{J+1}} \frac{\epsilon^{J+1}}{(J+1)!} = \epsilon^{J+1}\frac{c_{J+1}}{(J+1)!} \frac{(-1)^{J+1}}{\sqrt{u}(\sqrt{u+\xi_\epsilon})^{2J+3}},
	\]
	for some $\xi_\epsilon \in [0,\epsilon]$. Then
	\[
	|R_{K_0,J}(u,\epsilon)| \le \epsilon^{J+1} \frac{C_J}{u^{J+2}}
	\]
	for some constant $C_J>0$. Hence
	\[
	\left| \int_{x}^{\infty} u^j R_{K_0,J}(u,\epsilon)du \right| \le \epsilon^{J+1} C_J \int_x^\infty u^{j-J-2} du = O(\epsilon^{J+1}), \qquad \epsilon \downarrow 0,
	\]
	uniformly on $I$, since $C_J$ does not depend on $x$. For the second remainder, we denote $m=\min I>0$ and $M=\max I <1$ and we split the integral
	\[
	\int_0^x H_0^J(x,k) R_{K_0,J}(x-k,\epsilon) dk = \int_{0}^{x/2} H_0^J(x,k) R_{K_0,J}(x-k,\epsilon) dk + \int_{x/2}^{x} H_0^J(x,k) R_{K_0,J}(x-k,\epsilon) dk.
	\]
	Since $H_0^J(x,k)k^\gamma$, where $\gamma = \alpha-1/2$, is bounded by a constant independently of $x$ on $[0,M/2]$, there exists a constant $C>0$ such that
	\begin{align*}
	\left|\int_{0}^{x/2} H_0^J(x,k) R_{K_0,J}(x-k,\epsilon) dk\right| &\le C \epsilon^{J+1} \int_{0}^{x/2} \frac{1}{k^{\gamma}(x-k)^{J+2}}dk  \\ &\le C \epsilon^{J+1} \left( \frac{2}{M} \right)^{J+2} \frac{M^{1-\gamma}}{1-\gamma} = O(\epsilon^{J+1}), \qquad \epsilon \downarrow 0,
	\end{align*}
	uniformly on $I$. For the other integral, we observe
	\[
	|H_0^J(x,k)| = \left|\int_{k}^{x} \frac{h_0^{(J+1)}(t)}{J!} (k-t)^{J} dt \right| \le \sup_{t \in [m/2,M]} |h_0^{(J+1)}(t)| \frac{(x-k)^{J+1}}{J!}
	\]
	for $k \in [x/2,x]$ and hence there exist constants $C,C'>0$, independent of $x$, such that
	\begin{align*}
	\left| \int_{x/2}^{x} H_0^J(x,k) R_{K_0,J}(x-k,\epsilon) dk \right| &\le C \int_{x/2}^{x} (x-k)^{J+1}\int_{0}^{\epsilon}\left| \frac{\partial^{J+1}}{\partial t^{J+1}}K_0(x-k,t) \right| (\epsilon-t)^{J} dt dk \\
	&= C'  \int_{x/2}^{x} \int_{0}^{\epsilon} \frac{(x-k)^{J+1}}{\sqrt{x-k}(\sqrt{x-k+t})^{2J+3}} (\epsilon-t)^{J} dt dk \\
	&\le C' \epsilon^{J+1} \int_{x/2}^{x}\int_{0}^{1} \frac{1}{(x-k)+\epsilon s}ds dk \\
	&=C' \epsilon^J \left( (x/2+\epsilon)\ln(x/2+\epsilon)-(x/2) \ln(x/2) - \epsilon \ln (\epsilon) \right).
	\end{align*}
	We observe
	\[
	\frac{ (x/2+\epsilon)\ln(x/2+\epsilon)-(x/2) \ln(x/2) }{\epsilon} = \ln(\xi_\epsilon) + 1
	\]
	for some $\xi_\epsilon \in [x/2,x/2+\epsilon]$ and hence the quotient is bounded on $I$. This implies that
	\[
	\int_{x/2}^{x} H_0^J(x,k) R_{K_0,J}(x-k,\epsilon) dk = \mathcal{O} (\epsilon^{J+1} \ln(\epsilon)), \qquad \epsilon \downarrow 0,
	\]
	uniformly on $I$. A similar procedure for $X_2$ completes the proof. 
\end{proof}

\begin{remark} \upshape \label{TheoremNaiveAsymptotics}
	The first few terms of the asymptotic expression found in Theorem \ref{GeneralThmAsymptotics} can also be written as in \eqref{formulanaiveasymptotics}, where $x\in (0,1)$ and where we assume that $V_0, V_1 \in C^3((0,1))$ satisfy the conditions \eqref{V0V1assumptions}. Indeed, consider the case of $X_1$; the case of $X_2$ is similar. By splitting
\[
\pi X_1= \int_{0}^{x} h_0(k)K_0(x-k,\epsilon)dk =  \int_{0}^{x/2} h_0(k)K_0(x-k,\epsilon)dk + \int_{x/2}^{x} h_0(k)K_0(x-k,\epsilon)dk,
\]
we can expand $K_0$ in the first term and in the second term we integrate by parts. This yields
\begin{align*}
\pi X_1= &\;\int_{0}^{x/2} \frac{h_0(k)}{x-k}dk +\left( - h_0(x)\ln(\epsilon) +2h_0(x/2)\ln(\sqrt{x/2}+\sqrt{x/2+\epsilon}) \right) \\
&+2\int_{x/2}^x \ln(\sqrt{x-k}+\sqrt{x-k+\epsilon})h_0'(k)dk + O(\epsilon),
\end{align*}
as $\epsilon \downarrow 0$. Since $h_0'$ is bounded on $[x/2,x]$, this gives
\begin{align*}
\pi X_1= &\int_{0}^{x/2} \frac{h_0(k)}{x-k}dk  - h_0(x)\ln(\epsilon) +h_0(x/2)\ln(2x)  \\
&+\int_{x/2}^x \ln(4(x-k))h_0'(k)dk + O(\ln(\epsilon)\epsilon), \qquad \epsilon \downarrow 0.
\end{align*}
In a similar way as \eqref{FormulaDerivative} we obtain the identity
\[
\frac{d}{dx} \left( \int_{0}^{x} h_0(k) \ln(4(x-k))dk \right) =\int_{0}^{x/2} \frac{h_0(k)}{x-k}dk+h_0(x/2)\ln(2x)+\int_{x/2}^x \ln(4(x-k))h_0'(k)dk.
\]
The following coefficients can be computed similarly by integrating by parts iteratively.
\end{remark}

Alternatively, \eqref{formulanaiveasymptotics} can be obtained directly from Theorem \ref{GeneralThmAsymptotics} by writing
\begin{align*}
\frac{d}{dx} \left( \int_{0}^{x} h_0(k) \ln(4(x-k))dk \right) &= \frac{d}{dx} \left( \int_{0}^{x} H^0_0(x,k) \ln(4(x-k))dk + \int_{0}^{x} h_0(x) \ln(4(x-k))dk \right) \\
&= \int_0^x \frac{H_0^0(x,k)}{x-k} dk - h_0'(x) \int_{0}^{x} \ln(4(x-k))dk \\
& \; \; \; + \frac{d}{dx} \left( h_0(x) \int_0^x \ln(4(x-k))dk \right) \\
&= \int_0^x \frac{H_0^0(x,k)}{x-k} dk + h_0(x) \ln(4) +h_0(x) \ln(x),
\end{align*}
which is equal to $A_0^0(x)$ in Theorem \ref{GeneralThmAsymptotics}. The cases of $h_1$ and the next coefficients are similar.

\begin{example}[Solution of Khan and Penrose] \upshape
	The solution of Khan and Penrose (cf.~\cite{G1991,KP1971}) is given by
	\begin{align*} 
	V(x,y) = -\ln \left(\frac{1+\sqrt{x}\sqrt{1-y} + \sqrt{y} \sqrt{1-x}}{1-\sqrt{x}\sqrt{1-y} -\sqrt{y} \sqrt{1-x}}  \right).
	\end{align*}	
	If we directly compute the asymptotic behavior of $V$ as $x+y\to1$ we find
	\begin{align}\label{SolutionKhanPenrose}
	V(x,1-x-\epsilon) = 2\ln(\epsilon) - \ln(16(1-x)x) - \frac{1-2x}{2(1-x)x} \epsilon +\frac{3(1-2x+2x^2)}{16(1-x)^2x^2} \epsilon^2 + O(\epsilon^3).
	\end{align}
	If we compute the asymptotics by applying Theorem \ref{GeneralThmAsymptotics}, we have $h_0=-\pi =h_1$ and 
	\begin{align*}
	A_0(x) &= -\pi (\ln(x) +\ln(4)), &
	A_1(x) &= \frac{\pi}{2x}, &
	A_2(x) &=\frac{3\pi}{16 x^2}, 
	\\
	B_0(x) &= -\pi(\ln(1-x) + \ln(4) )&
	B_1(x) &=\frac{\pi}{2(1-x)}, &
	B_2(x) &= \frac{3\pi}{16 (1-x)^2},
	\end{align*}
	which is consistent with \eqref{SolutionKhanPenrose}.
\end{example}

\section{Application to gravitational waves} \label{SectionApplication}
It is shown \cite{G1991} (Eq.~(10.2)) that the colliding gravitational wave problem for collinearly polarized plane waves reduces to solving the equation
\begin{align}
\label{GriffithsEq}V_{fg} + \frac{V_f + V_g}{2(f+g)} = 0
\end{align}
in the triangular region 
\[
\left\{(f,g) \in \R^2 \, | \, f \leq \frac{1}{2}, \; g \leq \frac{1}{2}, \; f + g > 0 \right\}.
\]
The change of variables $x = \frac{1}{2} - g$, $y = \frac{1}{2} - f$ transforms \eqref{GriffithsEq} into \eqref{linearernst}. The components of the Weyl tensor (cf. Eq. (10.4) in \cite{G1991}) are given by
\begin{align*}
\Psi_0^\circ & = -  \frac{g'(v)^2}{4}\bigg( 2V_{gg}+\frac{3}{f+g}V_g - (f+g)V_g^3 \bigg)
\\
& = -  \frac{g'(v)^2}{4}\bigg( 2V_{xx}-\frac{3}{1-x-y}V_x +(1-x-y)V_x^3 \bigg),
\\
\Psi_2^\circ & =  f'(u) g'(v)\bigg(V_fV_g-\frac{1}{(f+g)^2}\bigg)
\\
& =  f'(u) g'(v)\bigg( V_xV_y-\frac{1}{(1-x-y)^2} \bigg),
\\
\Psi_4^\circ & = - \frac{f'(u)^2}{4}\bigg(2V_{ff}+\frac{3}{f+g}V_f - (f+g)V_f^3\bigg)
\\
& = - \frac{f'(u)^2}{4}\bigg( 2V_{yy}-\frac{3}{1-x-y}V_y +(1-x-y)V_y^3 \bigg),
\end{align*}
where $u,v \ge 0$ are suitable null coordinates in the interaction region \cite[Chapter 6]{G1991}. Using that $f(u)$ and $g(v)$ have the form (see Eq. (7.6) in \cite{G1991}) 
\begin{align*}
f(u) & = \frac{1}{2} - (c_1u)^{n_1}, \qquad g(v) = \frac{1}{2} - (c_2v)^{n_2}, 
\\
f'(u) & = - c_1 n_1 (c_1u)^{n_1-1}
= - \frac{n_1}{u}(\frac{1}{2} - f)
= - \frac{n_1}{u}y=- c_1 n_1y^{1- \frac{1}{n_1}},
\\
 g'(v) &= - c_2 n_2 (c_2v)^{n_2-1}
= - c_2 n_2x^{1- \frac{1}{n_2}},
\end{align*}
with some constants $c_1,c_2,n_1,n_2$, we find
\begin{align} \begin{split} \label{ComponentsWeyl}
\Psi_0^\circ & = \frac{(c_2 n_2x^{1- \frac{1}{n_2}})^2}{4}\bigg( 2V_{xx}-\frac{3}{1-x-y}V_x +(1-x-y)V_x^3 \bigg),
\\
\Psi_2^\circ & = c_1 n_1y^{1- \frac{1}{n_1}}c_2 n_2x^{1- \frac{1}{n_2}}\bigg( V_xV_y-\frac{1}{(1-x-y)^2} \bigg),
\\
\Psi_4^\circ & =\frac{(c_1 n_1y^{1- \frac{1}{n_1}})^2}{4}\bigg( 2V_{yy}-\frac{3}{1-x-y}V_y +(1-x-y)V_y^3 \bigg).
\end{split}
\end{align}
Theorem \ref{GeneralThmAsymptotics} implies
\[
V(x,y) =  \sum_{j=0}^{N}f_j(x)\epsilon^j \ln (\epsilon) +\sum_{j=0}^{N} g_j(x) \epsilon^j + O\big(\epsilon^{N+1}\ln(\epsilon)\big)
\]
as $\epsilon=\epsilon(x,y)=1-x-y \to 0$ uniformly for $x,y$ in some neighborhood of the diagonal away from the corners, where 
\[
f_j(x) = \frac{-c_j}{\pi (j!)^2}\left(h_0^{(j)}(x)+h_1^{(j)}(x)\right), \quad g_j(x)=\frac{1}{\pi}\left((-1)^jA_j(x)+B_j(x)\right).
\]
 Due to symmetry of the representation \eqref{linearVrepresentation}, it holds that
\[
V(x,y) =  \sum_{j=0}^{N}f_j(1-y)\epsilon^j \ln (\epsilon) +\sum_{j=0}^{N} \tilde{g}_j(1-y) \epsilon^j + O\big(\epsilon^{N+1}\ln(\epsilon)\big)
\]
as $\epsilon=1-x-y \to 0$, where $\tilde{g}_j$ is defined by interchanging $h_0$ and $h_1$ in the definition of $g_j$. It is easy to see from the proof of Theorem \ref{GeneralThmAsymptotics} that the $y$-derivative of the remainder is $O((1-x-y)^N \ln(1-x-y))$ uniformly. Letting $\epsilon=\epsilon(x,y)=1-x-y$, this gives
\begin{align*}
V_y(x,y) &= -\sum_{j=0}^{N}j f_j(x)\epsilon^{j-1} \ln (\epsilon) - \sum_{j=0}^{N} \gamma_j (x) \epsilon^{j-1}
+ O\big((\epsilon^N \ln(\epsilon)\big), \quad \epsilon \downarrow 0,
\end{align*}
and
\begin{align*}
V_{yy}(x,y) &= \sum_{j=0}^{N}(j-1)jf_j(x)\epsilon^{j-2} \ln (\epsilon)+ \sum_{j=0}^{N} G_j(x)\epsilon^{j-2} + O\big(\epsilon^{N-1}\ln(\epsilon)\big), \quad \epsilon \downarrow 0,
\end{align*}
where $\gamma_j = jg_j +f_j$ and $G_j =(j-1)\gamma_j +jf_j$. In the same way we get
\begin{align*}
V_x(x,y) &= -\sum_{j=0}^{N}jf_j(1-y)\epsilon^{j-1} \ln (\epsilon) - \sum_{j=0}^{N} \tilde{\gamma}_j(1-y)\epsilon^{j-1}+ O\big(\epsilon^{N}\ln(\epsilon)\big), \quad \epsilon \downarrow 0,
\\
V_{xx}(x,y) &= \sum_{j=0}^{N}(j-1)jf_j(1-y)\epsilon^{j-2} \ln (\epsilon)  + \sum_{j=0}^{N} \tilde{G}_j(1-y) \epsilon^{j-2} + O\big(\epsilon^{N-1}\ln(\epsilon)\big), \quad \epsilon \downarrow 0,
\end{align*}
where $\tilde{\gamma}_j = j\tilde{g_j} +f_j$ and $\tilde{G}_j =(j-1)\tilde{\gamma}_j +jf_j$. Together with \eqref{ComponentsWeyl} this leads to full expansions of the components of the Weyl tensor near the diagonal. We have shown the following corollary of Theorem \ref{GeneralThmAsymptotics}.
\begin{corollary}[Asymptotics of the Weyl tensor] Let $N \ge 1$ be and integer. Suppose  $V_0,V_1 \in C^{N+2}((0,1))$ satisfy \eqref{V0V1assumptions}. Then the components of the Weyl tensor associated to the solution $V(x,y)$ of the Goursat problem for \eqref{linearernst} with data $\{ V_0,V_1 \}$ have the following asymptotic behaviour as $\epsilon=\epsilon(x,y)=1-x-y \to 0:$
	\begin{align*}
	\Psi_0^o(x,y)&=\frac{(c_2 n_2x^{1- \frac{1}{n_2}})^2}{4}\Bigg[ \sum_{j=0}^{N} (2\tilde{G}_j(1-y) +3\tilde{\gamma}_j(1-y))\epsilon^{j-2}\\ &\quad+ 	 \sum_{j=0}^{N}(j+2)jf_j(1-y)\epsilon^{j-2} \ln (\epsilon) 
	\\
	&\quad-\epsilon\left( \sum_{j=0}^{N}jf_j(1-y)\epsilon^{j-1} \ln (\epsilon) + \sum_{j=0}^{N} \tilde{\gamma}_j(1-y)\epsilon^{j-1} \right)^3+ O\big(\epsilon^{N-1}\ln(\epsilon)^3\big) \Bigg],
	\end{align*}
	\begin{align*}
	\Psi_2^o(x,y)&=c_1 n_1c_2 n_2y^{1- \frac{1}{n_1}}x^{1- \frac{1}{n_2}}\Bigg[ \sum_{j=0}^{N} \left( \sum_{k+l=j} \gamma_k(x) \tilde{\gamma}_l(1-y) \right)\epsilon^{j-2}  - \frac{1}{\epsilon^2} 
	\\
	&\quad + \sum_{j=0}^{N}\left( \sum_{k+l=j} kf_k(x) \tilde{\gamma}_l(1-y) + kf_k(1-y) {\gamma}_l(x)  \right)\epsilon^{j-2} \ln (\epsilon) 
	\\
	&\quad + \sum_{j=0}^{N} \left( \sum_{k+l=j} klf_k(x) {f}_l(1-y) \right)\epsilon^{j-2} \ln(\epsilon)^2
+ O\big(\epsilon^{N-1}\ln(\epsilon)^2\big) \Bigg],
\\
	\Psi_4^o(x,y)&=\frac{(c_1 n_1y^{1- \frac{1}{n_1}})^2}{4}\Bigg[ \sum_{j=0}^{N} (2{G}_j(x) +3{\gamma}_j(x))\epsilon^{j-2} 
	\\
	&\quad + \sum_{j=0}^{N}(j+2)jf_j(x)\epsilon^{j-2} \ln (\epsilon) 
	\\
	&\quad-\epsilon\left( \sum_{j=0}^{N}jf_j(x)\epsilon^{j-1} \ln (\epsilon) + \sum_{j=0}^{N} {\gamma}_j(x)\epsilon^{j-1} \right)^3 + O\big(\epsilon^{N-1}\ln(\epsilon)^3\big) \Bigg],
	\end{align*}
where the error terms are uniform for $(x,y) \in D$ bounded away from the corners of $D$.
\end{corollary}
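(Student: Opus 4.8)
The plan is to obtain the expansions directly from Theorem \ref{GeneralThmAsymptotics} and the closed forms \eqref{ComponentsWeyl}: I would insert the asymptotic expansions of $V$ and of its first- and second-order derivatives into \eqref{ComponentsWeyl} and then collect terms according to powers of $\epsilon=1-x-y$ and of $\ln(\epsilon)$.

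First I would record two expansions of $V$ near the diagonal. Theorem \ref{GeneralThmAsymptotics} gives the expansion of $V(x,1-x-\epsilon)$ as a series in $\epsilon$ with coefficients $f_j(x)$ and $g_j(x)$. Interchanging $h_0$ and $h_1$ in the symmetric representation \eqref{linearVrepresentation} produces a second expansion whose coefficients are evaluated at $1-y$, namely $f_j(1-y)$ and $\tilde g_j(1-y)$. The first expansion is adapted to differentiation in $y$ (with $x$ fixed and $\partial_y\epsilon=-1$) and yields $V_y$ and $V_{yy}$ with coefficients $\gamma_j=jg_j+f_j$ and $G_j=(j-1)\gamma_j+jf_j$; the second expansion is adapted to differentiation in $x$ and yields $V_x$ and $V_{xx}$ with the analogous coefficients $\tilde\gamma_j$ and $\tilde G_j$.

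The crucial analytic step, and the one I expect to be the main obstacle, is to justify differentiating these expansions term by term. This does not follow formally: one must check that differentiating the remainder $O(\epsilon^{J+1}\ln(\epsilon))$ in $x$ or $y$ yields a remainder of order $O(\epsilon^J\ln(\epsilon))$, uniformly on compact subsets of $(0,1)$. To establish this I would revisit the remainder bounds from the proof of Theorem \ref{GeneralThmAsymptotics}---in particular the estimates on $\int_x^\infty u^j R_{K_0,J}(u,\epsilon)\,du$ and on the two pieces of $\int_0^x H_0^J(x,k) R_{K_0,J}(x-k,\epsilon)\,dk$---and verify that the bounds for their $x$-derivatives, obtained via the alternative differentiation formula \eqref{FormulaDerivative}, stay uniform and degrade by only one power of $\epsilon$. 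Granting this, the chain rule delivers the stated coefficients for $V_y$, $V_{yy}$ and, symmetrically, for $V_x$, $V_{xx}$.

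Finally I would substitute these expansions into \eqref{ComponentsWeyl} and expand the algebraic combinations, collecting the coefficients of $\epsilon^{j-2}$ and of $\epsilon^{j-2}\ln(\epsilon)$. For $\Psi_2^\circ$ this amounts to multiplying the expansions of $V_x$ and $V_y$ and subtracting the singular term $(1-x-y)^{-2}=\epsilon^{-2}$; the Cauchy products $\sum_{k+l=j}$ of coefficients appear, and products of the logarithmic factors generate the $\ln(\epsilon)^2$ contributions. For $\Psi_0^\circ$ and $\Psi_4^\circ$ one combines the linear part $2V_{xx}-3\epsilon^{-1}V_x$ (respectively in $V_y$) into the displayed polynomial-in-$\epsilon$ series, while the cubic term $\epsilon V_x^3$ (respectively $\epsilon V_y^3$) is retained as the cube of the corresponding series and, beyond the stated orders, absorbed into the error $O(\epsilon^{N-1}\ln(\epsilon)^3)$. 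This last step is routine bookkeeping; the prefactors $x^{1-1/n_2}$ and $y^{1-1/n_1}$ are inert since they are independent of $\epsilon$, and the uniformity of the error terms away from the corners follows from that in Theorem \ref{GeneralThmAsymptotics} together with the differentiated remainder estimates.
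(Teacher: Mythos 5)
Your proposal is correct and follows essentially the same route as the paper: it uses the two symmetric expansions from Theorem \ref{GeneralThmAsymptotics} (one with coefficients in $x$ for the $y$-derivatives, one with coefficients in $1-y$ for the $x$-derivatives), justifies term-by-term differentiation via uniform estimates on the differentiated remainder, and substitutes into \eqref{ComponentsWeyl} with the same coefficients $\gamma_j = jg_j+f_j$, $G_j=(j-1)\gamma_j+jf_j$ and their tilded analogues. The only difference is one of emphasis: you rightly flag the differentiability of the error term as the key analytic point to be checked against the remainder estimates in the proof of Theorem \ref{GeneralThmAsymptotics}, whereas the paper simply asserts this is ``easy to see'' from that proof.
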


\bigskip
\noindent
{\bf Acknowledgement} {\it The author thanks Jonatan Lenells for bringing this problem to his attention. Support is acknowledged from the European Research Council, Grant Agreement No. 682537.}

\bibliographystyle{plain}

\end{document}